\newcommand{\keywords}[1]{\par\addvspace\baselineskip
\noindent\keywordname\enspace\ignorespaces#1}
\begin{document}

\mainmatter  
\title{Random Surfing Without Teleportation}
\titlerunning{Random Surfing Without Teleportation}

\author{Athanasios N. Nikolakopoulos \and John D. Garofalakis}
\authorrunning{Nikolakopoulos and Garofalakis}
\institute{Computer Engineering and Informatics Department, University of Patras
\and CTI and Press ``Diophantus'' \\ \mailsa \mailsb}

\toctitle{Random Surfing Without Teleportation}
\tocauthor{Nikolakopoulos and Garofalakis}
\maketitle

\begin{abstract}
In the standard \textit{Random Surfer Model}, the teleportation matrix is necessary to ensure that the final PageRank vector is well-defined. The introduction of this matrix, however, results in serious problems and imposes fundamental limitations to the quality of the ranking vectors. 
In this work, building on the recently proposed \textit{NCDawareRank} framework, we exploit the decomposition of the underlying space into blocks, and we derive easy to check necessary and sufficient conditions for \textit{random surfing without teleportation}.
\keywords{Link Analysis, Ranking, PageRank, Teleportation, Non-Negative Matrices, Decomposability}
\end{abstract}

\section{Introduction \& Motivation}
The astonishing amount of information available on the Web 
and the highly variable quality of its content generate the need for an absolute measure of importance for Web pages, that can be used to improve the performance of Web search. Link Analysis algorithms such as the celebrated \textit{PageRank}, try to answer this need by using the link structure of the Web to assign authoritative weights to the pages~\cite{pagerank}.

\begin{figure}
	\centering
	\subfloat[][PageRank]{\epsfig{file=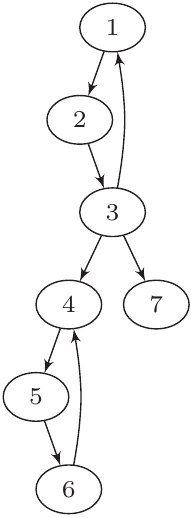}}
	\qquad \qquad
	\subfloat[][NCDawareRank]{\hspace*{2.0em}\epsfig{file=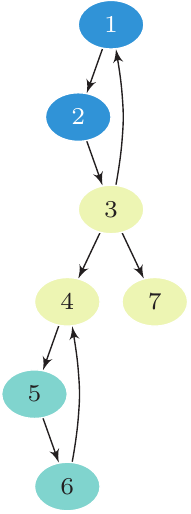}\hspace*{1.3em}}
	\caption{In the left figure we see a tiny graph as viewed by PageRank and in the right, the same graph as viewed by NCDawareRank. Same colored nodes  belong to the same block and are considered related according to a given criterion.}
	\label{tiny_web}
\end{figure}
PageRank's approach is based on the assumption that links convey human endorsement. For example,  the existence of 
a link from page $3$ to page $7$ in Fig.~\ref{tiny_web}(a) is seen as a testimonial of the importance of page $7$. Furthermore, the amount of importance conferred to page $7$ is proportional to 
the importance of page $3$ and inversely proportional to the number of pages 
$3$ links to. In their original paper, Page et al.~\cite{pagerank} imagined of a 
\textit{random surfer} who, with probability $\alpha$ follows the links of a Web page, and with probability $1-\alpha$ jumps to a different page uniformly at random. Then, following this metaphor, the overall importance of a page was defined to be equal to the fraction of time this random surfer spends on it, in the long run.

Formulating PageRank's basic idea with a mathematical model, involves viewing the Web as a directed graph with Web pages as vertices and hyperlinks as edges. Given this graph, we can construct a \textit{row-normalized hyperlink matrix} $\mathbf{H}$, whose element $[\mathbf{H}]_{uv}$ is one over the outdegree of $u$ if there is a link from $u$ to $v$, or zero otherwise. The matter of dangling nodes is fixed with some sort of stochasticity adjustment, thereby transforming the initial matrix $\mathbf{H}$, to a stochastic matrix. 

A second adjustment is needed to certify that the final matrix is irreducible and aperiodic, so that it possesses a unique positive stationary probability distribution. That is ensured by the introduction of the \textit{damping factor} $\alpha$ and a \textit{teleportation} matrix $\mathbf{E}$, usually defined by $\mathbf{E} = \frac{1}{n}\mathbf{e}\mathbf{e}^{\intercal}$. The resulting matrix is given by:
\begin{equation}
	\label{PageRank_Matrix_G}
	\mathbf{G} = \alpha \mathbf{H} + (1-\alpha)\mathbf{E} 
\end{equation}
PageRank vector is the unique stationary distribution of the Markov chain corresponding to matrix $\mathbf{G}$.

The choice of the damping factor has received much attention since it determines the fraction of the importance of a node that is propagated through the edges rather than scattered throughout the graph via the teleportation matrix. Obviously, picking a very small damping factor ignores the link structure of the graph and results in uninformative ranking vectors. On the other hand, setting the damping factor very close to one,  causes  a number of serious problems. From a computational perspective,  as $\alpha\to 1$, the number of iterations till convergence to the PageRank vector grows prohibitively, and also makes the computation of the rankings numerically ill-conditioned~\cite{kamvar2003condition,LangvilleMeyer06}. Moreover, from a qualitative point of view, various studies indicate that damping factors close to 1 
result into counterintuitive ranking vectors where  all the PageRank gets concentrated mostly in irrelevant nodes, while the Web's core component is assigned null rank~\cite{Avrachenkov:2007:DPM:1777879.1777881,DBLP:conf/dagstuhl/BoldiSV07,Boldi:2009:PFD:1629096.1629097,Nikolakopoulos:2013:NNR:2433396.2433415}. Finally, the very existence of the damping factor and the related teleportation matrix ``opens the door'' to direct manipulation of the ranking score through link spamming~\cite{constantine2009random,Eiron:2004:RWF:988672.988714}. 

In the literature there have been proposed several ranking methods that try to address these issues. Boldi~\cite{Boldi:2005:TRW:1062745.1062787} proposed an algorithm that eliminates PageRank's dependency on the arbitrarily chosen parameter $\alpha$ by integrating the ranking vector over the entire range of possible damping factors. Baeza-Yates et al.~\cite{Baeza-Yates:2006:GPD:1148170.1148225} introduced a family of link-based ranking algorithms parametrised by the selection of a damping function that describes how rapidly the importance of paths decays as the path length increases. Constantine and Gleich~\cite{constantine2009random} proposed a ranking method that considers the influence of a population of random surfers, each choosing its own damping factor from a probability distribution.

All the above methods attack the problem from the damping factor point of view, while taking the teleportation matrix as granted. Nikolakopoulos and Garofalakis~\cite{Nikolakopoulos:2013:NNR:2433396.2433415}, on the other hand, focus on the teleportation model itself. Building on the intuition behind Nearly Decomposable Systems~\cite{Courtois:1985:TSD:3812.3814,Simon:1996:SA:237774,simon1961aggregation}, the authors proposed \textit{NCDawareRank}; a novel ranking framework that generalizes and
refines PageRank by enriching the teleportation model in a computationally efficient way. NCDawareRank decomposes the underlying space into NCD blocks, and uses these blocks to define indirect relations between the nodes in the graph (Fig.~\ref{tiny_web}(b)) which lead to the introduction of a new inter-level proximity component. A comprehensive set of experiments done by the authors using real snapshots of the Web Graph showed that the introduction of this decomposition alleviates the negative effects of uniform teleportation and produces ranking vectors that display low sensitivity to sparsity and, at the same time, exhibit resistance to direct manipulation through link spamming (see the discussion in Sections 4.2 and 4.3 in~\cite{Nikolakopoulos:2013:NNR:2433396.2433415} for further details). However, albeit reducing some of its negative effects, NCDawareRank model also includes the standard teleportation matrix as a purely mathematical necessity. But, is it?

The main questions we try to address in this work are the following: 
\textit{Is it possible to discard the uniform teleportation altogether? And if so, under which conditions?} Thankfully, the answer is yes. In particular, we show that, the definition of the NCD blocks, can be enough to ensure the production of well-defined ranking vectors without resorting to uniform teleportation. The criterion for this to be true is expressed solely in term of properties of the proposed decomposition, which makes it very easy to check and at the same time gives insight that can lead to better decompositions for the particular ranking problems under consideration.  

The rest of the paper is organized as follows: After discussing NCDawareRank model (Section~\ref{Sec:NCDawareRank}) we derive sufficient and necessary conditions under which the inter-level proximity matrix enables us to discard the teleportation  matrix completely (Section~\ref{SubSec:Primitivity}). In Section~\ref{Sec_Overlapping}, we generalize NCDawareRank model, in order to allow the definition of overlapping blocks without compromising its theoretical and computational properties. Finally, in Section~\ref{Sec_Conclussions} we discuss future direction and conclude this work.

\section{NCDawareRank Model}
\label{Sec:NCDawareRank}
Before we proceed to our main result, we present here the basic definitions behind the NCDawareRank model. Our presentation follows the one given in~\cite{Nikolakopoulos:2013:NNR:2433396.2433415}.

\subsection{Notation}
All vectors are represented by bold lower case letters and they are column vectors (e.g., $\boldsymbol{\pi}$). All matrices are represented by bold upper case letters (e.g., $\mathbf{P} $). The $i^{\text{th}}$ row and $j^{\text{th}}$ column of matrix $\mathbf{P}$ are denoted $\mathbf{p}^\intercal_{i}$ and $\mathbf{p}_{j}$, respectively. The $ij^{th}$ element of matrix $\mathbf{P}$ is denoted $[\mathbf{P}]_{ij}$. We use $\operatorname{\textbf{Diag}}(\boldsymbol{\omega})$ to denote the matrix having vector $\boldsymbol{\omega}$ on its diagonal, and zeros elsewhere. We use calligraphic letters to denote sets (e.g., $\mathcal{U,V}$). $[1,n]$ is used to denote the set of integers $\{1,2,\dots,n\}$. Finally, symbol $\triangleq$ is used in definition statements.

\subsection{Definitions}

Let $\mathcal{U}$ be a set of nodes (e.g. the universe of Web pages) and denote $n\triangleq|\mathcal{U}|$.  Consider a node $u$ in $\mathcal{U}$. We denote $\mathcal{G}_u$ to  be the set of nodes that can be visited in a single step from $u$. Clearly, $d_u\triangleq|\mathcal{G}_u|$ is the out-degree of $u$, i.e. the number of outgoing edges of $u$.

We consider a partition of the underlying space $\mathcal{U}$ that defines a \textbf{decomposition}:
\begin{equation}
\mathcal{M} \triangleq \{\mathcal{D}_1,\dots,\mathcal{D}_K\}
\end{equation}
 such that, $\mathcal{D}_k\neq \emptyset$, for all $k$ in $[1,K]$.

Each set $\mathcal{D}_I$ is referred to as an \textbf{NCD Block}, and its elements are considered related according to a given criterion,  chosen for the particular ranking problem (e.g. the partition of the set of Web Pages into websites). 

We define $\mathcal{M}_u$ to be the set of \textit{proximal} nodes  of $u$, i.e the union of the NCD blocks that contain $u$ and the nodes it links to. Formally, the set $\mathcal{M}_u$ is defined by:

\begin{equation}
\mathcal{M}_u \triangleq \bigcup_{{w \in (u\cup\mathcal{G}_u)}}\mathcal{D}_{(w)}
\label{def:proximal}
\end{equation} 
where $\mathcal{D}_{(u)}$ is used to denote the unique block that includes node $u$.
Finally, $N_u$  denotes the number of different blocks in $\mathcal{M}_u$.

\begin{description}
	\item[Hyperlink Matrix.] The hyperlink matrix $\mathbf{H}$, as in the standard PageRank Model, is a row normalized version of the adjacency matrix induced by the  graph, and its $uv^{th}$ element is defined as follows:
	\begin{equation}
	[\mathbf{H}]_{uv} \triangleq \left\{
	\begin{array}{l l}
	\frac{1}{d_u} & \quad \mbox{if $v \in \mathcal{G}_u$}\\
	0 & \quad \mbox{otherwise}\\
	\end{array} \right. 
	\end{equation}
	
	Matrix $\mathbf{H}$ is assumed to be a row-stochastic matrix. The matter of dangling nodes (i.e. nodes with no outgoing edges) is considered fixed through some sort of stochasticity adjustment. 
	
	\item[Inter-Level Proximity Matrix.] The Inter-Level Proximity matrix $\mathbf{M}$ is created to depict the interlevel connections between the nodes in the graph.
	In particular, each row of matrix $\mathbf{M}$ denotes a probability vector $\mathbf{m}^\intercal_u$, that distributes evenly its mass between the $N_u$ blocks of $\mathcal{M}_u$, and then, uniformly to the included nodes of each block. Formally, the $uv^{th}$ element of matrix $\mathbf{M}$, that relates the node $u$ with node $v$, is defined as
	\begin{equation}
	[\mathbf{M}]_{uv}\triangleq \left\{
	\begin{array}{l l}
	\frac{1}{N_u|\mathcal{D}_{(v)}|} & \quad \mbox{if $v \in \mathcal{M}_u$}\\
	0 & \quad \mbox{otherwise}\\
	\end{array} \right. 
	\label{def:M}
	\end{equation}
	From the definition of the NCD blocks and the proximal sets, it is clear that whenever the number of blocks is smaller than the number of nodes in the graph, i.e. $K<n$, matrix $\mathbf{M}$ is necessarily low-rank; in fact, a closer look at the definitions~(\ref{def:proximal}) and~(\ref{def:M}) above, suggests that matrix $\mathbf{M}$ admits a very useful factorization, which was shown in~\cite{Nikolakopoulos:2013:NNR:2433396.2433415} to ensure the tractability of the resulting model. In particular, matrix 
	$\mathbf{M}$ can be expressed as a product of 2 extremely sparse matrices, 
	$\mathbf{R}$ and $\mathbf{A}$, defined below.

	Matrix $\mathbf{A} \in \mathbb{R}^{K\times n}$ is defined as follows:  
	\begin{equation}
	\mathbf{A} \triangleq
	\begin{bmatrix}
	\mathbf{e}^{\intercal}_{|\mathcal{D}_1|} & \boldsymbol{0} & \boldsymbol{0} & \cdots & \boldsymbol{0} \\
	\boldsymbol{0} & \mathbf{e}^{\intercal}_{|\mathcal{D}_2|} & \boldsymbol{0} & \cdots & \boldsymbol{0} \\
	\boldsymbol{0} & \boldsymbol{0} &  \mathbf{e}^{\intercal}_{|\mathcal{D}_3|} & \cdots & \boldsymbol{0} \\
	\vdots & \vdots & \vdots & \ddots & \boldsymbol{0} \\
	\boldsymbol{0} & \boldsymbol{0} & \boldsymbol{0} & \cdots & \mathbf{e}^{\intercal}_{|\mathcal{D}_K|}    
	\end{bmatrix} 
	\label{matrixA}
	\end{equation}
	where $\mathbf{e}^{\intercal}_{|\mathcal{D}_k|}$ denotes a row vector in 
	$\mathbb{R}^{|\mathcal{D}_k|}$ whose elements are all 1. Now, using the diagonal matrix $\mathbf{\Delta}$: 
	\begin{equation}
	\mathbf{\Delta} \triangleq \operatorname{\mathbf{Diag}}\left(\begin{bmatrix}|\mathcal{D}_1| & & & |\mathcal{D}_2| & &  & \cdots & &  & |\mathcal{D}_K|\end{bmatrix}\right)
	\label{matrixDelta}
	\end{equation} 
	and a row normalized matrix $\mathbf{\Gamma} \in \mathbb{R}^{n\times K}$, whose rows correspond  to nodes and columns to blocks and its elements are given by
	\begin{equation}
	[\mathbf{\Gamma}]_{ij} \triangleq \left\{
	\begin{array}{l l}
	\frac{1}{N_u} & \quad \mbox{if $\mathcal{D}_{j} \in \mathcal{M}_{u_i}$}\\
	0 & \quad \mbox{otherwise}\\
	\end{array} \right. 
	\label{matrixGamma}
	\end{equation}
	we can define the matrix $\mathbf{R}$ as follows:
	\begin{equation}
	\mathbf{R} \triangleq \mathbf{\Gamma}\mathbf{\Delta}^{-1}
	\label{matrixR}
	\end{equation}
	
	Using~(\ref{matrixA}) and~(\ref{matrixR}), it is straight forward to verify that: 
	\begin{eqnarray}
	\mathbf{M}  &=&  \mathbf{R} \mathbf{A} \\
	\mathbf{R} \in \mathbb{R}^{n\times K}& & \mathbf{A} \in  \mathbb{R}^{K\times n} \nonumber 
	\end{eqnarray}
	As pointed out by the authors~\cite{Nikolakopoulos:2013:NNR:2433396.2433415}, this factorization can lead to significant advantages in realistic scenarios, in terms of both storage and computability (see~\cite{Nikolakopoulos:2013:NNR:2433396.2433415}, Section 3.2.1). 
	\item[Teleportation Matrix.] Finally, NCDawareRank model also includes a teleportation matrix $\mathbf{E}$,
	\begin{equation}
	\mathbf{E} \triangleq \mathbf{e} \mathbf{v}^\intercal
	\end{equation} where, $\mathbf{v>0}$ such that $\mathbf{v^\intercal e} = 1$. The introduction of this matrix, can be seen as a remedy to ensure that the underlying Markov chain, corresponding to the final matrix, is irreducible and aperiodic and thus has a unique positive stationary probability distribution \cite{Nikolakopoulos:2013:NNR:2433396.2433415}.  
\end{description}

\bigskip

\noindent The resulting matrix which we denote $\mathbf{P}$ is expressed by:
\begin{equation}
	\label{NCDawareRank}
	\mathbf{P} = \eta \mathbf{H} + \mu \mathbf{M} + (1-\eta - \mu) \mathbf{E}  
\end{equation}
Parameter $\eta$ controls the fraction of importance delivered to the outgoing edges and parameter $\mu$ controls the fraction of importance that will be propagated to the proximal nodes. In order to ensure the irreducibility and aperiodicity of the final stochastic matrix in the general case, $\eta + \mu$ must be less than $1$.  This leaves $1-\eta-\mu$ of importance scattered throughout the graph through matrix $\mathbf{E}$.


\bigskip

\section{Necessary and Sufficient Conditions for Random Surfing Without Teleportation}

Although in the general case the teleportation matrix is required to ensure the final stochastic matrix produces a well-defined ranking vector, in this Section we show that NCDawareRank model carries the possibility of discarding matrix $\mathbf{E}$ altogether. Before we proceed to the proof of our main result (Section~\ref{SubSec:Primitivity}) we present here the  necessary preliminary definitions and theorems.

\subsection{Preliminaries}

\begin{definition}[Irreducibility] 
	An $n\times n$ non-negative matrix $\mathbf{P}$ is called \textit{irreducible} if for every pair of indices $i,j \in [1,n]$, there exists a positive integer $m \equiv m(i,j)$ such that $[\mathbf{P}^m]_{ij}>0$. The class of all non-negative irreducible matrices is denoted $\mathfrak{I}$.
\end{definition}

\begin{definition}[Period]
	The \textit{period} of an index $i\in[1,n]$ is defined to be the greatest common divisor of all positive integers $m$ such that $[\mathbf{P}^m]_{ii}>0$.
\end{definition}

\begin{proposition}[Periodicity as a Matrix Property]
	 For an irreducible matrix, the period of every index is the same and is referred to as the period of the matrix.
\end{proposition}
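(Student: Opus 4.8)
The plan is to show that for any two indices $i,j\in[1,n]$ the period $d_i$ of $i$ divides the period $d_j$ of $j$; the reverse divisibility then follows by symmetry, yielding $d_i=d_j$. The argument rests entirely on irreducibility, which links the two indices by walks in both directions, together with the elementary observation that $[\mathbf{P}^{a+b}]_{k\ell}\ge[\mathbf{P}^a]_{km}[\mathbf{P}^b]_{m\ell}$ for any intermediate index $m$, which is immediate from the summation formula for matrix products and the non-negativity of the entries.

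First I would use irreducibility to fix positive integers $r,s$ with $[\mathbf{P}^r]_{ij}>0$ and $[\mathbf{P}^s]_{ji}>0$. Applying the inequality to the round trip $i\to j\to i$ gives $[\mathbf{P}^{r+s}]_{ii}\ge[\mathbf{P}^r]_{ij}[\mathbf{P}^s]_{ji}>0$, so that $d_i$ divides $r+s$.

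The heart of the argument is to transport an arbitrary return loop at $j$ into a return loop at $i$. For any $m$ with $[\mathbf{P}^m]_{jj}>0$, concatenating the walk $i\to j$ (length $r$), the loop at $j$ (length $m$), and the walk $j\to i$ (length $s$) yields, by two applications of the same inequality, $[\mathbf{P}^{r+m+s}]_{ii}>0$. Hence $d_i\mid(r+m+s)$, and together with $d_i\mid(r+s)$ this forces $d_i\mid m$. Since $m$ ranges over the entire set of return times of $j$, the period $d_i$ is a common divisor of that set and therefore divides its greatest common divisor $d_j$.

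I expect the only real subtlety to be presentational: stating the concatenation inequality cleanly, applying it consistently, and keeping the number-theoretic step ($d_i\mid(r+s)$ and $d_i\mid(r+m+s)$ imply $d_i\mid m$) transparent. There is no heavy computation; the genuine content is recognizing that irreducibility supplies the two connecting walks and that divisibility of the difference converts ``$d_i$ divides every loop length at $i$'' into ``$d_i$ divides every loop length at $j$''. Exchanging the roles of $i$ and $j$ gives $d_j\mid d_i$, and the two divisibilities complete the proof.
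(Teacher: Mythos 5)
Your argument is correct and complete: the concatenation inequality, the divisibility bookkeeping ($d_i\mid r+s$ and $d_i\mid r+m+s$ giving $d_i\mid m$), and the symmetry step are exactly the classical proof of this fact. The paper states the proposition without proof, as a standard result from the theory of non-negative matrices (cf.\ Seneta), so there is nothing to contrast with; your writeup would serve as a faithful reconstruction of the omitted argument.
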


\begin{definition}[Primitivity]
	An irreducible matrix with period $d=1$, is called \textit{primitive}. The important subclass of all primitive matrices will be denoted $\mathfrak{P}$.
\end{definition}
Finally, we give here, without proof, the following fundamental result of the theory of non-negative matrices\footnote{For thorough treatment of the theory as well as proofs to several formulations of the Perron-Frobenius theorem the interested reader can see~\cite{seneta2006non}}. 
\begin{theorem}[Perron-Frobenius Theorem for Primitive Matrices\cite{Frobenius-1908-theorem,Perron-1907-theorem}]
	Suppose $\mathbf{T}$ is an $n\times n$ non-negative primitive matrix. Then, there exists an eigenvalue $r$ such that:
	\begin{enumerate}
		\item[{\upshape(a)}] $r$ is real and positive,
		\item[{\upshape(b)}] with $r$ can be associated strictly positive left and right eigenvectors,
		\item[{\upshape(c)}] $r>\lvert\lambda\rvert$ for any eigenvalue $\lambda\neq r$
		\item[{\upshape(d)}] the eigenvectors associated with $r$ are unique to constant multiples,
		\item[{\upshape(e)}] if $0\leq \mathbf{B} \leq \mathbf{T}$ and $\beta$ is an eigenvalue of $\mathbf{B}$, then $\lvert \beta \rvert \leq r$. Moreover, 
		\begin{displaymath}
		\lvert \beta \rvert = r \quad \Longrightarrow \quad \mathbf{B}=\mathbf{T}
		\end{displaymath}
		\item[{\upshape(f)}] $r$ is a simple root of the characteristic equation of $\mathbf{T}$.
	\end{enumerate}
\end{theorem}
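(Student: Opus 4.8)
The plan is to prove all six conclusions in two stages: reduce the primitive case to the strictly positive case, then carry out the core analysis for a positive matrix via the Collatz--Wielandt variational characterisation of the dominant root. The reduction rests on the standard combinatorial fact — a consequence of irreducibility together with period $d=1$, via a Frobenius/coin-problem argument on cycle lengths — that $\mathbf{T}$ is primitive if and only if $\mathbf{T}^M>\mathbf{0}$ entrywise for some finite $M$. Since the eigenvalues of $\mathbf{T}^M$ are precisely the $M$-th powers of those of $\mathbf{T}$, and since the $r^M$-eigenspace of $\mathbf{T}^M$ turns out to be one-dimensional and $\mathbf{T}$-invariant (so its positive generator is automatically an eigenvector of $\mathbf{T}$), every spectral property proved for $\mathbf{A}\triangleq\mathbf{T}^M>\mathbf{0}$ transfers back to $\mathbf{T}$. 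I would therefore do the heavy lifting on a generic strictly positive $\mathbf{A}$.

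For $\mathbf{A}>\mathbf{0}$ I introduce the Collatz--Wielandt functional $r(\mathbf{x})\triangleq\min_{i:\,x_i>0}[\mathbf{A}\mathbf{x}]_i/x_i$ for $\mathbf{x}\ge\mathbf{0}$, $\mathbf{x}\neq\mathbf{0}$, and set $r\triangleq\sup_{\mathbf{x}}r(\mathbf{x})$ over the simplex $\{\mathbf{x}\ge\mathbf{0}:\mathbf{e}^{\intercal}\mathbf{x}=1\}$. The first technical point is that $r(\cdot)$ is merely upper semicontinuous, so to attain the supremum I would precompose with $\mathbf{x}\mapsto\mathbf{A}\mathbf{x}$, which maps the simplex into its strictly positive interior where $r(\cdot)$ is continuous; compactness then yields a maximiser $\mathbf{x}^{\star}$. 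An ``improvement'' argument now gives (a) and (b): were $\mathbf{A}\mathbf{x}^{\star}\ge r\mathbf{x}^{\star}$ strict in some coordinate, applying the positive matrix $\mathbf{A}$ would yield a vector of value strictly above $r$, contradicting maximality; hence $\mathbf{A}\mathbf{x}^{\star}=r\mathbf{x}^{\star}$ and $\mathbf{x}^{\star}=r^{-1}\mathbf{A}\mathbf{x}^{\star}>\mathbf{0}$. Thus $r$ is real and positive and carries a strictly positive right eigenvector; the positive left eigenvector follows by repeating the construction for $\mathbf{A}^{\intercal}$.

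Conclusions (d), (f), and (e) come next and are comparatively routine. For geometric simplicity (d): given two real eigenvectors for $r$, a suitable combination $\mathbf{x}^{\star}-c\mathbf{y}$ is non-negative with a zero entry, yet any non-negative eigenvector for $r$ must (by the positivity argument above) be either $\mathbf{0}$ or strictly positive — forcing linear dependence. For algebraic simplicity (f) I would rule out a generalised eigenvector: if $(\mathbf{A}-r\mathbf{I})\mathbf{u}=\mathbf{x}^{\star}$, pairing on the left with the positive left eigenvector $\mathbf{w}$ gives $0=\mathbf{w}^{\intercal}(\mathbf{A}-r\mathbf{I})\mathbf{u}=\mathbf{w}^{\intercal}\mathbf{x}^{\star}>0$, a contradiction, so the algebraic and geometric multiplicities coincide at $1$. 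For the domination property (e): if $\mathbf{0}\le\mathbf{B}\le\mathbf{A}$ and $\mathbf{B}\mathbf{z}=\beta\mathbf{z}$, then $|\beta|\,|\mathbf{z}|\le\mathbf{B}|\mathbf{z}|\le\mathbf{A}|\mathbf{z}|$ yields $|\beta|\le r$ by Collatz--Wielandt, and chasing the equality case through both inequalities (using $\mathbf{A}>\mathbf{0}$) pins down $\mathbf{B}=\mathbf{A}$.

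The main obstacle is conclusion (c), the \emph{strict} dominance $r>|\lambda|$ for every $\lambda\neq r$ — precisely the feature separating primitive matrices from merely irreducible ones, since an irreducible matrix of period $d>1$ carries $d$ eigenvalues of modulus $r$ on the circle. For $\mathbf{A}>\mathbf{0}$, $\mathbf{A}\mathbf{z}=\lambda\mathbf{z}$ with $|\lambda|=r$ forces $r|\mathbf{z}|\le\mathbf{A}|\mathbf{z}|$, whence $|\mathbf{z}|$ is itself a (hence strictly positive) Perron eigenvector; the triangle inequality $|[\mathbf{A}\mathbf{z}]_i|\le[\mathbf{A}|\mathbf{z}|]_i$ must then hold with equality in every row, and because every $A_{ij}>0$ and every $|z_j|>0$ this compels all components $z_j$ to share a single complex phase, so $\mathbf{z}=e^{i\theta}|\mathbf{z}|$ and $\lambda=r$. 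The delicate last move is the transfer to $\mathbf{T}$: the positive-matrix argument applied to $\mathbf{A}=\mathbf{T}^M$ only gives a strict gap for $\mathbf{T}^M$, so I must exclude distinct eigenvalues of $\mathbf{T}$ of modulus $r$ collapsing onto $r^M$. This is handled by simplicity of $r^M$ for $\mathbf{T}^M$: any $\lambda$ of $\mathbf{T}$ with $|\lambda|=r$ satisfies $\lambda^M=r^M$, so its eigenvector lies in the one-dimensional $r^M$-eigenspace spanned by $\mathbf{x}^{\star}>\mathbf{0}$, forcing $\mathbf{T}\mathbf{x}^{\star}=\lambda\mathbf{x}^{\star}$ and hence, comparing with $\mathbf{T}\mathbf{x}^{\star}=r\mathbf{x}^{\star}$, $\lambda=r$. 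I expect this phase-alignment-plus-no-collapse step to demand the most care.
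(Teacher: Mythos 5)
The paper does not prove this statement at all: it is quoted as a classical result, explicitly ``without proof,'' with a pointer to Seneta's monograph, so there is no internal argument to compare yours against. Judged on its own merits, your sketch is the standard Wielandt proof (reduce primitivity to $\mathbf{T}^M>\mathbf{0}$, run the Collatz--Wielandt variational argument on a strictly positive matrix, transfer back via simplicity and $\mathbf{T}$-invariance of the Perron eigenspace of $\mathbf{T}^M$), and the architecture and all six parts are essentially sound; this is exactly the kind of proof one finds in Seneta or Horn--Johnson. Two small points deserve more care than your blanket claim that ``every spectral property transfers back.'' First, part (e): from $\mathbf{0}\le\mathbf{B}\le\mathbf{T}$ you get $\mathbf{B}^M\le\mathbf{T}^M$ and hence $\lvert\beta\rvert\le r$, but the equality case only hands you $\mathbf{B}^M=\mathbf{T}^M$, and deducing $\mathbf{B}=\mathbf{T}$ from that needs an extra step (e.g.\ expand $\mathbf{T}^M-\mathbf{B}^M=\sum_k\mathbf{T}^k(\mathbf{T}-\mathbf{B})\mathbf{B}^{M-1-k}$ as a sum of non-negative terms and use that $\mathbf{T}^{M-1}$ can also be taken strictly positive, or rerun the equality chase directly on $\mathbf{T}$ using $\mathbf{T}\lvert\mathbf{z}\rvert=r\lvert\mathbf{z}\rvert$ and $\lvert\mathbf{z}\rvert>\mathbf{0}$). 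Second, in (d) you treat only real eigenvectors; the complex case follows by applying the argument separately to real and imaginary parts, which is worth one sentence. Also note that upper semicontinuity already suffices to attain a supremum on a compact set, so the precomposition with $\mathbf{x}\mapsto\mathbf{A}\mathbf{x}$ is a convenience rather than a necessity. None of these affect correctness.
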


\subsection{NCDawareRank Primitivity Criterion}
 \label{SubSec:Primitivity}
Mathematically, in the standard PageRank model the introduction of the teleportation matrix can be seen as a \textit{primitivity adjustment} of the final stochastic matrix. Indeed, the hyperlink matrix is typically reducible~\cite{LangvilleMeyer06,pagerank}, so if the teleportation matrix had not existed the PageRank vector would not be well-defined. 

In the general case, the same holds for NCDawareRank, as well. However, for suitable decompositions of the underlying graph, matrix $\mathbf{M}$ opens the door for achieving primitivity without resorting to the uninformative teleportation matrix.  Here, we show that this ``suitability'' of the decompositions can, in fact, be reflected on the properties of a low dimensional  \textbf{Indicator Matrix} defined below: 

\begin{definition}[Indicator Matrix]
	For every decomposition $\mathcal{M}$, we define an Indicator Matrix $\mathbf{W}\in \mathbb{R}^{K \times K}$ designed to capture the inter-block relations of the underlying graph. Concretely, matrix $\mathbf{W}$ is defined as follows:
	\begin{displaymath}
		\mathbf{W} \triangleq \mathbf{A}\mathbf{R},  
	\end{displaymath} 
	where $\mathbf{A,R}$ are the factors of the inter-level proximity matrix $\mathbf{M}$.
\end{definition}

Clearly, whenever $[\mathbf{W}]_{IJ}$ is positive, there exists a node $u \in \mathcal{D}_I$ such that $\mathcal{D}_J \in \mathcal{M}_u$. Intuitively, one can see that a positive element in matrix $\mathbf{W}$ implies the existence of possible inter-level ``random surfing paths'' between the nodes belonging to the corresponding blocks. Thus, if the indicator matrix $\mathbf{W}$ is irreducible, these paths exist between every pair of nodes in the graph, which makes the stochastic matrix $\mathbf{M}$ also irreducible. 

In fact, in the following theorem we show that the irreducibility of matrix $\mathbf{W}$ is enough to certify the primitivity of the final NCDawareRank matrix, $\mathbf{P}$. Then, just choosing positive numbers $\eta,\mu$ that sum to one, leads to a well-defined ranking vector produced by an NCDawareRank model without a teleportation component.

\begin{theorem}[Primitivity Criterion]
	The NCDawareRank matrix  $\mathbf{P} = \eta\mathbf{H} + \mu\mathbf{M}$, with $\eta$ and $\mu$ positive real numbers such that $\eta + \mu = 1$, is primitive if and only if the indicator matrix $\mathbf{W}$ is irreducible. Concretely, $\mathbf{P} \in \mathfrak{P} \iff \mathbf{W} \in \mathfrak{I}$.
	\label{Theorem:PrimitivityConditions}
\end{theorem}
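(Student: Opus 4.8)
\noindent\emph{Proof idea.} The plan is to show that, thanks to the block structure, the primitivity of $\mathbf{P}$ collapses to the plain irreducibility of $\mathbf{M}$, which in turn is equivalent to the irreducibility of the low-dimensional factor $\mathbf{W}$. Two elementary structural observations drive the argument. First, since every node lies in its own proximal set ($u\in\mathcal{M}_u$ because $\mathcal{D}_{(u)}\subseteq\mathcal{M}_u$), the diagonal of $\mathbf{M}$ is strictly positive; as $\mu>0$, so is the diagonal of $\mathbf{P}$. Second, whenever $v\in\mathcal{G}_u$ we have $\mathcal{D}_{(v)}\in\mathcal{M}_u$, hence $[\mathbf{M}]_{uv}>0$; that is, the support of $\mathbf{H}$ is contained in the support of $\mathbf{M}$, so that $\mathbf{P}=\eta\mathbf{H}+\mu\mathbf{M}$ (with $\eta,\mu>0$ and no cancellation among non-negative entries) has exactly the same zero pattern as $\mathbf{M}$.

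Because irreducibility and periodicity depend only on the directed graph $G(\cdot)$ encoding the zero--nonzero pattern of a matrix, the second observation gives $G(\mathbf{P})=G(\mathbf{M})$, whence $\mathbf{P}\in\mathfrak{I}\iff\mathbf{M}\in\mathfrak{I}$; and the first observation supplies a self-loop at every vertex, which forces period $1$ as soon as the matrix is irreducible. Consequently $\mathbf{P}\in\mathfrak{P}\iff\mathbf{P}\in\mathfrak{I}\iff\mathbf{M}\in\mathfrak{I}$, and the entire theorem reduces to establishing $\mathbf{M}\in\mathfrak{I}\iff\mathbf{W}\in\mathfrak{I}$.

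For this last equivalence I would exploit the factorizations $\mathbf{M}=\mathbf{R}\mathbf{A}$ and $\mathbf{W}=\mathbf{A}\mathbf{R}$ through the companion identities
\[
\mathbf{M}^{m}=\mathbf{R}\,\mathbf{W}^{m-1}\mathbf{A}, \qquad \mathbf{W}^{m}=\mathbf{A}\,\mathbf{M}^{m-1}\mathbf{R}, \qquad m\ge 1 .
\]
Since all factors are non-negative, one positive summand already makes a product entry positive. For $\mathbf{W}\in\mathfrak{I}\Rightarrow\mathbf{M}\in\mathfrak{I}$, given a pair $u,v$ I route the walk through the blocks $(u)$ and $(v)$: as $\mathcal{D}_{(u)}\in\mathcal{M}_u$ we have $[\mathbf{R}]_{u,(u)}>0$, while $[\mathbf{A}]_{(v),v}=1$, so any $\ell\ge 1$ with $[\mathbf{W}^{\ell}]_{(u),(v)}>0$ (supplied by irreducibility of $\mathbf{W}$) gives $[\mathbf{M}^{\ell+1}]_{uv}>0$. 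For the converse, given blocks $\mathcal{D}_I,\mathcal{D}_J$ I pick representatives $u\in\mathcal{D}_I$, $v\in\mathcal{D}_J$ (all blocks are non-empty); then $[\mathbf{A}]_{Iu}>0$ and $[\mathbf{R}]_{vJ}>0$ (because $\mathcal{D}_J=\mathcal{D}_{(v)}\in\mathcal{M}_v$), so any $\ell\ge 1$ with $[\mathbf{M}^{\ell}]_{uv}>0$ yields $[\mathbf{W}^{\ell+1}]_{IJ}>0$. Chaining the equivalences closes the argument.

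I expect the main obstacle to be careful bookkeeping rather than conceptual depth: one must verify the companion identities and, crucially, select the right ``representative'' nodes and blocks so that the relevant entries of the very sparse factors $\mathbf{R}$ and $\mathbf{A}$ are guaranteed positive — the source block $(u)$ together with an arbitrary member of the target block doing the job in each direction. A secondary point worth stating explicitly is the support-containment step $\operatorname{supp}(\mathbf{H})\subseteq\operatorname{supp}(\mathbf{M})$: it relies on $\mathcal{M}_u$ being built from the out-neighbourhood $\mathcal{G}_u$, so the dangling-node stochasticity adjustment must be taken consistently with the decomposition, i.e.\ any artificial out-links introduced into $\mathbf{H}$ are also reflected in $\mathcal{G}_u$, hence in $\mathcal{M}_u$.
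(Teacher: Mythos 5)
Your proof is correct, but it is organized quite differently from the paper's. The paper splits the two directions asymmetrically: for $\mathbf{W}\in\mathfrak{I}\Rightarrow\mathbf{P}\in\mathfrak{P}$ it first upgrades $\mathbf{W}$ to primitive via its positive diagonal, pushes full positivity of a power through $\mathbf{M}^{k+1}=\mathbf{R}\mathbf{W}^{k}\mathbf{A}$, and then invokes a separate lemma that a convex combination of stochastic matrices containing a primitive one is primitive; for the converse it argues by contrapositive, permuting a reducible $\mathbf{W}$ into block upper triangular form and showing the same permutation triangularizes $\mathbf{M}$ and hence $\mathbf{P}$. You instead collapse the whole statement to a chain of irreducibility equivalences: the positive diagonal of $\mathbf{M}$ (hence of $\mathbf{P}$) gives $\mathbf{P}\in\mathfrak{P}\iff\mathbf{P}\in\mathfrak{I}$; the exact support equality $\operatorname{supp}(\mathbf{P})=\operatorname{supp}(\mathbf{M})$, which follows from $\operatorname{supp}(\mathbf{H})\subseteq\operatorname{supp}(\mathbf{M})$, gives $\mathbf{P}\in\mathfrak{I}\iff\mathbf{M}\in\mathfrak{I}$; and the two companion identities $\mathbf{M}^{m}=\mathbf{R}\mathbf{W}^{m-1}\mathbf{A}$, $\mathbf{W}^{m}=\mathbf{A}\mathbf{M}^{m-1}\mathbf{R}$, together with the correctly chosen positive entries $[\mathbf{R}]_{u,(u)}$ and $[\mathbf{A}]_{(v),v}$, give $\mathbf{M}\in\mathfrak{I}\iff\mathbf{W}\in\mathfrak{I}$ by a symmetric entrywise walk argument. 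The ingredients overlap (the factorization trick, the positive-diagonal observation, the support containment), but your route avoids both the convex-combination lemma and the permutation/block-triangular argument, treats the two directions uniformly, and makes explicit the slightly stronger fact that $\mathbf{P}$, $\mathbf{M}$ and $\mathbf{W}$ are simultaneously irreducible; the paper's route, on the other hand, yields the quantitative byproduct that $\mathbf{M}^{k+1}$ is entrywise positive once $\mathbf{W}^{k}$ is. Your closing caveat about the dangling-node adjustment having to be reflected in $\mathcal{G}_u$ (so that $\operatorname{supp}(\mathbf{H})\subseteq\operatorname{supp}(\mathbf{M})$ survives the stochasticity fix) is a legitimate point that the paper leaves implicit.
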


\begin{proof}
	We will first prove that 
	\begin{equation}
	\mathbf{W} \in \mathfrak{I} \implies \mathbf{P} \in \mathfrak{P}
	\end{equation} 
	
	First notice that whenever matrix $\mathbf{W}$ is irreducible then it is also primitive. In particular, it is known that when a non-negative irreducible matrix has at least one positive diagonal element, then it is also primitive. In case of matrix $\mathbf{W}$, notice that by the definition of the proximal sets and matrices $\mathbf{A,R}$, we get that $[\mathbf{W}]_{ii}>0$ for every $i$ in $[1,K]$. Thus, the irreducibility of the indicator matrix ensures its primitivity also. Formally, we have 
	\begin{equation}
	\mathbf{W} \in \mathfrak{I} \implies \mathbf{W} \in \mathfrak{P}
	\end{equation}

	Now if the indicator matrix $\mathbf{W}$ is primitive the same is true for the inter-level proximity matrix $\mathbf{M}$. We prove this in the following lemma.
	
	\begin{lemma} The primitivity of the indicator matrix $\mathbf{W}$ implies the primitivity of the inter-level proximity matrix $\mathbf{M}$, defined over the same decomposition, i.e
	\begin{equation}
		\mathbf{W} \in \mathfrak{P} \implies \mathbf{M} \in \mathfrak{P}		
	\end{equation}
	\label{lemma:W2M}
	\end{lemma}
	\begin{proof}
	It suffices to show that there exists a number $m$, such that for every pair of indices $i,j$, $[\mathbf{M}^m]_{ij}>0$ holds. Or equivalently there exists a positive integer $m$ such that $\mathbf{M}^m$ is a positive matrix (see \cite{seneta2006non}).
	
	This can be seen easily using the factorization of matrix $\mathbf{M}$ given above. In particular, since $\mathbf{W}\in\mathfrak{P}$, there exists a positive integer $k$ such that $\mathbf{W}^k>0$. Now, if we choose $m = k+1$, we get:
	\begin{eqnarray}
		\mathbf{M}^m & = & (\mathbf{RA})^{k+1} \nonumber \\
		& = & \underbrace{\mathbf{(RA)(RA)\cdots (RA)}}_{k+1\text{ times}} \nonumber \\
		& = & \mathbf{R} \underbrace{\mathbf{(AR)(AR)\cdots (AR)}}_{k\text{ times}}\mathbf{A} \nonumber \\
		& = & \mathbf{R} \mathbf{W}^k \mathbf{A}
		\label{rel:M_Prim}
	\end{eqnarray}
	
	However, matrix $\mathbf{W}^k$ is positive and since every row of matrix $\mathbf{R}$ and every column of matrix $\mathbf{A}$ are -- by definition -- non-zero, the final matrix, $\mathbf{M}^m$, is also positive. Thus, $\mathbf{M} \in \mathfrak{P}$, and the proof is complete. \qed
	\end{proof}
	
	Now, in order to get the primitivity of the final stochastic matrix $\mathbf{P}$, we use the following useful lemma which shows that any convex combination of stochastic matrices that contains at least one primitive matrix, is also primitive.
	\begin{lemma}
		Let $\mathbf{A}$ be a primitive stochastic matrix and $\mathbf{B_1,B_2,\dots,B_n}$ stochastic matrices, then matrix
		\begin{displaymath}
		\mathbf{C} = \alpha \mathbf{A}+\beta_1\mathbf{B_1}+\dots+\beta_n\mathbf{B_n}
		\end{displaymath} where $\alpha>0$ and $\beta_1,\dots,\beta_n\geq0$ such that $\alpha+\beta_1+\dots+\beta_n=1$ is a primitive stochastic matrix.
		\label{Lemma1}
	\end{lemma}

	\begin{proof}
		Clearly matrix $\mathbf{C}$ is stochastic as a convex combination of stochastic matrices (see~\cite{horn2012matrix}).
		For the primitivity part it suffices to show that there exists a natural number, $m$, such that $\mathbf{C}^m>0$. This can be seen very easily. In particular, since matrix $\mathbf{A} \in \mathfrak{P}$, there exists a number $k$ such that every element in $\mathbf{A}^{k}$ is positive.
		
		Consider the matrix $\mathbf{C}^m$:
		\begin{eqnarray}
			\mathbf{C}^m & = & (\alpha\mathbf{A}+\beta_1\mathbf{B_1}+\dots+\beta_n\mathbf{B_n})^m \nonumber \\
			& = & \alpha^m\mathbf{A}^m + (\text{sum of non-negative matrices})
		\end{eqnarray}
		Now letting $m=k$, we get that every element of matrix $\mathbf{C}^{k}$ is strictly positive, which completes the proof.   \qed
	\end{proof}

	As we have seen, when $\mathbf{W}\in \mathfrak{I}$, matrix $\mathbf{M}$ is primitive. Furthermore, $\mathbf{M}$ and $\mathbf{H}$ are by definition stochastic. Thus, Lemma~\ref{Lemma1} applies and we get that the NCDawareRank  matrix $\mathbf{P}$, is also primitive. 
	In conclusion, we have shown that: 
	\begin{equation}
		\mathbf{W}\in\mathfrak{I} 
		\implies\mathbf{W}\in\mathfrak{P}\implies\mathbf{M}\in\mathfrak{P}\implies\mathbf{P}\in\mathfrak{P}
		\label{ReverseProof}
	\end{equation}
	which proves the reverse direction of the theorem.
	
	To prove the forward direction (i.e. $\mathbf{P} \in \mathfrak{P} \implies \mathbf{W} \in \mathfrak{I}$) it suffices to show that whenever matrix $\mathbf{W}$ is reducible, matrix $\mathbf{P}$ is also reducible (and thus, not primitive \cite{seneta2006non}). 	First observe that when matrix $\mathbf{W}$ is reducible the same holds for matrix $\mathbf{M}$. 
	
	\begin{lemma}
		The reducibility of the indicator matrix $\mathbf{W}$ implies the reducibility of the inter-level proximity matrix $\mathbf{M}$. Concretely,
		\begin{equation}
		\mathbf{W} \notin \mathfrak{I} \implies \mathbf{M} \notin \mathfrak{I}
		\end{equation}
	\end{lemma}
	
	\begin{proof}
	Assume that matrix $\mathbf{W}$ is reducible. Then, there exists a permutation matrix $\mathbf{\Pi}$ such that $\mathbf{\Pi W \Pi^\intercal}$ has the form 
	\begin{equation}
		\begin{bmatrix}
			\mathbf{X} & \mathbf{Z} \\
			\mathbf{0} & \mathbf{Y}
		\end{bmatrix} 
		\label{rel:BlockUpperDiagonal}
	\end{equation}
	where $\mathbf{X,Y}$ are square matrices~\cite{seneta2006non}. Notice that a similar block upper triangular form can be then achieved for matrix $\mathbf{M}$. In particular, the existence of the block zero matrix in~(\ref{rel:BlockUpperDiagonal}), together with the definition of matrices $\mathbf{A,R}$ ensures the existence of a set of blocks, that have the property none of their including nodes to have outgoing edges to the rest of the nodes in the graph\footnote{notice that if this was not the case, there would be a nonzero element in the block below the diagonal necessarily.}. Thus, organizing the rows and columns of matrix $\mathbf{M}$ such that these nodes are assigned the last indices, results in a matrix $\mathbf{M}$ that has a similarly block upper triangular form. This makes $\mathbf{M}$ reducible too.  \qed
	\end{proof}
	
	Thus, we only need to show that the reducibility of matrix $\mathbf{M}$ implies the reducibility of matrix $\mathbf{P}$ also. This can arise from the fact that by definition 
	\begin{equation}
		[\mathbf{M}]_{ij}=0 \implies [\mathbf{H}]_{ij}=0.
	\end{equation} So, the permutation matrix that brings $\mathbf{M}$ in the form of~(\ref{rel:BlockUpperDiagonal}), has exactly the same effect on matrix $\mathbf{H}$. Similarly the  final stochastic matrix $\mathbf{P}$ has the same block upper triangular form as a sum of matrices $\mathbf{H}$ and $\mathbf{M}$. This makes matrix $\mathbf{P}$ reducible and hence non-primitive.
	
	Therefore, we have shown that $\mathbf{W} \notin \mathfrak{P} \implies \mathbf{P} \notin \mathfrak{I}$, which is equivalent to  
	\begin{equation}
		\mathbf{P} \in \mathfrak{P} \implies \mathbf{W} \in \mathfrak{I}
		\label{ForwardProof}
	\end{equation}
	Putting everything together, we see that both directions of our theorem have been established. Thus we get,
	\begin{equation}
	\mathbf{P} \in \mathfrak{P} \iff \mathbf{W} \in \mathfrak{I}
	\end{equation} and our proof is complete. \qed
\end{proof}

Now, when the stochastic matrix $\mathbf{P}$ is primitive, from the Perron-Frobenius theorem it follows that its largest eigenvalue -- which is equal to 1 -- is unique and it can be associated with strictly positive left and right eigenvectors. Therefore, under the conditions of Theorem~\ref{Theorem:PrimitivityConditions}, the ranking vector produced by the NCDawareRank model -- which is defined to be the stationary distribution of the stochastic matrix $\mathbf{P}$: (a) is uniquely determined as the (normalized) left eigenvector of $\mathbf{P}$ that corresponds to the eigenvalue 1 and, (b) its support includes every node in the underlying graph. The following corollary, summarizes the result.

\begin{corollary}
	When the indicator matrix $\mathbf{W}$ is irreducible, the ranking vector produced by NCDawareRank with $\mathbf{P} = \eta\mathbf{H} + \mu\mathbf{M}$, where $\eta,\mu$ positive real numbers such that $\eta + \mu = 1$ holds, denotes a well-defined distribution that assigns positive ranking to every node in the graph.
\end{corollary}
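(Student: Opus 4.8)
The plan is to obtain the corollary as a direct consequence of Theorem~\ref{Theorem:PrimitivityConditions} together with the Perron-Frobenius theorem for primitive matrices. First I would record the two facts that the rest of the argument needs. Since $\mathbf{W}\in\mathfrak{I}$ by hypothesis, the Primitivity Criterion immediately yields $\mathbf{P}\in\mathfrak{P}$; that is, the NCDawareRank matrix $\mathbf{P}=\eta\mathbf{H}+\mu\mathbf{M}$ is primitive. Moreover, $\mathbf{P}$ is row-stochastic, being a convex combination (with $\eta,\mu>0$ and $\eta+\mu=1$) of the row-stochastic matrices $\mathbf{H}$ and $\mathbf{M}$. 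Primitivity and stochasticity are the only inputs required.

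Next I would pin down the Perron root of $\mathbf{P}$. Because $\mathbf{P}$ is row-stochastic we have $\mathbf{P}\mathbf{e}=\mathbf{e}$, so $1$ is an eigenvalue of $\mathbf{P}$ with the all-ones vector $\mathbf{e}$ as an associated right eigenvector; and since every row sums to $1$, no eigenvalue can exceed $1$ in modulus. Invoking parts (a) and (c) of the Perron-Frobenius theorem applied to the primitive matrix $\mathbf{P}$, the dominant eigenvalue $r$ is real, positive, and strictly larger in modulus than every other eigenvalue. Combining these observations forces $r=1$, and it is the unique eigenvalue lying on the unit circle.

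I would then extract the ranking vector. By part (b) of Perron-Frobenius, the eigenvalue $r=1$ admits a strictly positive left eigenvector $\boldsymbol{\pi}^\intercal$ satisfying $\boldsymbol{\pi}^\intercal\mathbf{P}=\boldsymbol{\pi}^\intercal$, and by part (d) this left eigenvector is unique up to a scalar multiple. Normalizing so that $\boldsymbol{\pi}^\intercal\mathbf{e}=1$ therefore singles out a single vector $\boldsymbol{\pi}$; since all its entries are strictly positive and sum to one, it is a genuine probability distribution whose support is the entire node set. This $\boldsymbol{\pi}$ is precisely the stationary distribution of $\mathbf{P}$, i.e.\ the NCDawareRank ranking vector, so the ranking is simultaneously well-defined (unique), a valid distribution, and positive on every node, as claimed.

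The argument is a routine application of standard theory, so there is no serious combinatorial obstacle; the only point warranting care is the identification $r=1$. This requires establishing both halves of the claim: existence of the eigenvalue $1$, which follows from stochasticity via $\mathbf{P}\mathbf{e}=\mathbf{e}$, and its isolation on the unit circle together with the \emph{strict} positivity (rather than mere non-negativity) of the associated left eigenvector, which is exactly what primitivity buys us through parts (c) and (d). It is this combination that upgrades the bare primitivity furnished by the theorem into a well-defined, everywhere-positive ranking vector.
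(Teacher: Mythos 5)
Your proposal is correct and follows essentially the same route as the paper: apply Theorem~\ref{Theorem:PrimitivityConditions} to conclude $\mathbf{P}\in\mathfrak{P}$, then invoke the Perron--Frobenius theorem for primitive matrices to obtain a unique, strictly positive, normalized left eigenvector for the dominant eigenvalue $1$. Your write-up is merely more explicit than the paper's brief argument, notably in justifying $r=1$ via $\mathbf{P}\mathbf{e}=\mathbf{e}$ and the row-sum bound on the spectral radius, which the paper leaves implicit.
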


\section{Generalizing the NCDawareRank Model}
\label{Sec_Overlapping}
\subsection{The Case of Overlapping Blocks}
In our discussion so far, we assumed that the block decomposition defines a partition of the underlying space. However, in many realistic ranking scenarios it would be useful to be able to  allow the blocks to overlap. For example, if one wants to produce top N lists of movies for a ranking-based recommender system, using NCDawareRank, a very intuitive criterion for decomposition would be the one depicting the categorization of movies into genres~\cite{NikolakopoulosG14}. Of course, such a decomposition naturally results in  overlapping blocks, since a movie usually belongs to more than one genres.

Fortunately, the factorization of the inter-level proximity matrix, paves the path towards a straight forward generalization, that inherits all the useful mathematical properties and computational characteristics of the standard NCDawareRank model. 

In particular, it suffices to modify the definition of decompositions as  indexed families of non-empty sets
\begin{equation}
\mathcal{\hat{M}} \triangleq \{\mathcal{\hat{D}}_1,\dots,\mathcal{\hat{D}}_K\}
\end{equation} 
that collectively cover the underlying space, i.e. 
\begin{equation}\mathcal{U}=\bigcup_{k=1}^{K}\mathcal{\hat{D}}_k
\end{equation}
and to change slightly the definitions of the:
\begin{itemize}
	\item Proximal Sets: \begin{equation}
		\mathcal{\hat{M}}_u \triangleq \bigcup_{{w \in (u\cup\mathcal{G}_u),w \in \mathcal{\hat{D}}_k}}\mathcal{\hat{D}}_k
		\label{def:proximal_ovelapping}
	\end{equation} 
	\item Inter-Level Proximity Matrix: \begin{equation}
	[\mathbf{\hat{M}}]_{uv}\triangleq \sum_{\mathcal{\hat{D}}_k \in \mathcal{\hat{M}}_{u}, v \in \mathcal{\hat{D}}_k}\frac{1}{N_{u}\lvert \mathcal{\hat{D}}_k\rvert}
	\label{def:M_overlapping}
	\end{equation}
	\item Factor Matrices $\mathbf{\hat{A}},\mathbf{\hat{R}}$:   We first define a matrix $\mathbf{X}$,  whose $ik^{\textit{th}}$ element is 1, if $\mathcal{\hat{D}}_k \in \mathcal{\hat{M}}_i$ and zero otherwise, and a matrix $\mathbf{Y}\in \mathbb{R}^{K\times n}$, whose $kj^{\textit{th}}$ element is 1 if $v_j \in \mathcal{\hat{D}}_k$ and zero otherwise.
	Then, if $\mathbf{\hat{R}}$, $\mathbf{\hat{A}}$ denote the row-normalized versions of $\mathbf{X}$ and $\mathbf{Y}$ respectively, matrix $\mathbf{\hat{M}}$ can be expressed as: 
	\begin{equation}
	\mathbf{\hat{M}}  =  \mathbf{\hat{R}} \mathbf{\hat{A}}, \quad \mathbf{\hat{R}} \in \mathbb{R}^{n\times K}, \mathbf{\hat{A}} \in  \mathbb{R}^{K\times n}.
	\end{equation}
	
\end{itemize} 

\begin{remark}
	Notice that the Inter-Level Proximity Matrix above is a well-defined stochastic matrix, for every possible decomposition. Its stochasticity can arise immediately from the row normalization of matrices $\mathbf{\hat{R}}, \mathbf{\hat{A}}$, together with the fact that neither matrix $\mathbf{X}$ nor matrix $\mathbf{Y}$  have zero rows. Indeed, the existence of a zero row in matrix $\mathbf{X}$ implies 
		\begin{math}
			\mathcal{U} \neq \bigcup_{k=1}^{K}\mathcal{\hat{D}}_k,
		\end{math}
		which contradicts the definition of $\mathcal{\hat{M}}$; similarly the existence of a zero row in matrix $\mathbf{Y}$ contradicts the definition of the NCD blocks $\mathcal{\hat{D}}$ which are defined to be non-empty.
	
\end{remark}

\begin{remark}
Also notice that our primitivity criterion given by Theorem~\ref{Theorem:PrimitivityConditions}, applies in the overlapping case too, since our proof made no assumption for mutual exclusiveness for the NCD-blocks. In fact, it is intuitively evident that overlapping blocks promote the irreducibility of the indicator matrix $\mathbf{W}$. 
\end{remark}

\section{Discussion and Future Work}
\label{Sec_Conclussions}
In this work, using an approach based on the theory of non-negative matrices,
we study NCDawareRank's inter-level proximity model and we derive necessary and sufficient conditions, under which the underlying  decomposition alone could result in a well-defined ranking vector -- eliminating the need for uniform teleportation. Our goals here were mainly theoretical. However, our first findings in applying this ``no teleportation'' approach in realistic problems suggest that the conditions for primitivity are not prohibitively restrictive, especially if the criterion behind the definition of the decomposition implies overlapping blocks~\cite{NikolakopoulosG14,Nikolakopoulos2015,nikolakopoulos2015top}. 

A very exciting direction we are currently pursuing involves the spectral implications of the absence of the teleportation matrix. In particular, a very interesting problem would be to determine bounds of the subdominant eigenvalue of the stochastic matrix $\mathbf{P} = \eta\mathbf{H} + \mu\mathbf{M}$, when the indicator matrix $\mathbf{W}$ is irreducible. Another important direction would be to proceed to randomized definitions of blocks that satisfy the primitivity criterion and to test the effect on the quality of the ranking vector. 

In conclusion, we believe that our results, suggest that the NCDawareRank model presents a promising approach towards generalizing and enriching the standard random surfer model, and also carries the potential of providing an intuitive alternative teleportation scheme to the many applications of PageRank in hierarchical or otherwise specially structured graphs.

\bibliographystyle{splncs03}

\begin{thebibliography}{10}
	\providecommand{\url}[1]{\texttt{#1}}
	\providecommand{\urlprefix}{URL }
	
	\bibitem{Avrachenkov:2007:DPM:1777879.1777881}
	Avrachenkov, K., Litvak, N., Pham, K.: Distribution of pagerank mass among
	principle components of the web. In: Bonato, A., Chung, F. (eds.) Algorithms
	and Models for the Web-Graph, Lecture Notes in Computer Science, vol. 4863,
	pp. 16--28. Springer Berlin Heidelberg (2007),
	\url{http://dx.doi.org/10.1007/978-3-540-77004-6_2}
	
	\bibitem{Baeza-Yates:2006:GPD:1148170.1148225}
	Baeza-Yates, R., Boldi, P., Castillo, C.: Generic damping functions for
	propagating importance in link-based ranking. Internet Mathematics  3(4),
	445--478 (2006), \url{http://dx.doi.org/10.1080/15427951.2006.10129134}
	
	\bibitem{Boldi:2005:TRW:1062745.1062787}
	Boldi, P.: Totalrank: Ranking without damping. In: Special Interest Tracks and
	Posters of the 14th International Conference on World Wide Web. pp. 898--899.
	WWW '05, ACM, New York, NY, USA (2005),
	\url{http://doi.acm.org/10.1145/1062745.1062787}
	
	\bibitem{DBLP:conf/dagstuhl/BoldiSV07}
	Boldi, P., Santini, M., Vigna, S.: A deeper investigation of pagerank as a
	function of the damping factor. In: Frommer, A., Mahoney, M.W., Szyld, D.B.
	(eds.) Web Information Retrieval and Linear Algebra Algorithms, 11.02. -
	16.02.2007. Dagstuhl Seminar Proceedings, vol. 07071. Internationales
	Begegnungs- und Forschungszentrum f{\"{u}}r Informatik (IBFI), Schloss
	Dagstuhl, Germany (2007),
	\url{http://drops.dagstuhl.de/opus/volltexte/2007/1072}
	
	\bibitem{Boldi:2009:PFD:1629096.1629097}
	Boldi, P., Santini, M., Vigna, S.: Pagerank: Functional dependencies. ACM
	Trans. Inf. Syst.  27(4),  19:1--19:23 (Nov 2009),
	\url{http://doi.acm.org/10.1145/1629096.1629097}
	
	\bibitem{constantine2009random}
	Constantine, P.G., Gleich, D.F.: Random alpha pagerank. Internet Mathematics
	6(2),  189--236 (2009),
	\url{http://dx.doi.org/10.1080/15427951.2009.10129185}
	
	\bibitem{Courtois:1985:TSD:3812.3814}
	Courtois, P.J.: On time and space decomposition of complex structures. Commun.
	ACM  28(6),  590--603 (Jun 1985), \url{http://doi.acm.org/10.1145/3812.3814}
	
	\bibitem{Eiron:2004:RWF:988672.988714}
	Eiron, N., McCurley, K.S., Tomlin, J.A.: Ranking the web frontier. In:
	Proceedings of the 13th International Conference on World Wide Web. pp.
	309--318. WWW '04, ACM, New York, NY, USA (2004),
	\url{http://doi.acm.org/10.1145/988672.988714}
	
	\bibitem{Frobenius-1908-theorem}
	Frobenius, G.: {\"U}eber matrizen aus positiven elementen i and ii.
	Sitzungsber. Preuss. Akad. Wiss., Berlin pp. 471--476 (1908)
	
	\bibitem{horn2012matrix}
	Horn, R.A., Johnson, C.R.: Matrix analysis. Cambridge university press (2012)
	
	\bibitem{kamvar2003condition}
	Kamvar, S., Haveliwala, T.: The condition number of the pagerank problem
	(2003)
	
	\bibitem{LangvilleMeyer06}
	Langville, A.N., Meyer, C.D.: Google's PageRank and beyond: the science of
	search engine rankings. Princeton University Press (2011)
	
	\bibitem{Nikolakopoulos:2013:NNR:2433396.2433415}
	Nikolakopoulos, A.N., Garofalakis, J.D.: {NCDawareRank}: A novel ranking method
	that exploits the decomposable structure of the web. In: Proceedings of the
	Sixth ACM International Conference on Web Search and Data Mining. pp.
	143--152. WSDM '13, ACM, New York, NY, USA (2013),
	\url{http://doi.acm.org/10.1145/2433396.2433415}
	
	\bibitem{NikolakopoulosG14}
	Nikolakopoulos, A.N., Garofalakis, J.D.: {NCDREC:} {A} decomposability inspired
	framework for top-n recommendation. In: 2014 {IEEE/WIC/ACM} International
	Joint Conferences on Web Intelligence {(WI)} and Intelligent Agent
	Technologies (IAT), Warsaw, Poland, August 11-14, 2014 - Volume {II}. pp.
	183--190. {IEEE} (2014), \url{http://dx.doi.org/10.1109/WI-IAT.2014.32}
	
	\bibitem{nikolakopoulos2015top}
	Nikolakopoulos, A.N., Garofalakis, J.D.: Top-n recommendations in the presence
	of sparsity: An ncd-based approach. In: Web Intelligence. vol.~13, pp.
	247--265. IOS Press (2015)
	
	\bibitem{Nikolakopoulos2015}
	Nikolakopoulos, A.N., Kouneli, M.A., Garofalakis, J.D.: Hierarchical itemspace
	rank: Exploiting hierarchy to alleviate sparsity in ranking-based
	recommendation. Neurocomputing  163(0),  126 -- 136 (2015),
	\url{http://www.sciencedirect.com/science/article/pii/S0925231215002180}
	
	\bibitem{pagerank}
	Page, L., Brin, S., Motwani, R., Winograd, T.: The pagerank citation ranking:
	Bringing order to the web.  (1999)
	
	\bibitem{Perron-1907-theorem}
	Perron, O.: Zur theorie der matrices. Mathematische Annalen  64(2),  248--263
	(1907)
	
	\bibitem{seneta2006non}
	Seneta, E.: Non-negative matrices and markov chains. Springer Series in
	Statistics  (2006)
	
	\bibitem{Simon:1996:SA:237774}
	Simon, H.A.: The sciences of the artificial, vol. 136. MIT press (1996)
	
	\bibitem{simon1961aggregation}
	Simon, H.A., Ando, A.: Aggregation of variables in dynamic systems.
	Econometrica: journal of the Econometric Society pp. 111--138 (1961)
	
\end{thebibliography}

\end{document}